\pdfoutput=1
\documentclass[conference]{IEEEtran}
\IEEEoverridecommandlockouts
\PassOptionsToPackage{hyphens}{url} 
\usepackage{amsmath,amssymb,amsthm}
\usepackage{graphicx}
\usepackage{booktabs}
\usepackage{courier}   
\usepackage{enumitem}
\setlist{nosep,leftmargin=1.2em}
\usepackage[hidelinks]{hyperref}

\newcommand{\Xbar}{\bar{X}}
\newcommand{\Acal}{\mathcal{A}}
\newcommand{\Tcal}{\mathcal{T}}
\newcommand{\Ecal}{\mathcal{E}}
\newcommand{\Rn}{\mathcal{R}_{\mathrm{nat}}}
\newcommand{\ivr}{\textsc{IVR}}
\newtheorem{proposition}{Proposition}
\newtheorem{observation}{Observation}

\begin{document}
\title{What You Sign Is Not What You Run:\\ The Verifiability Gap in Quantized LLM Supply Chains}
\author{\IEEEauthorblockN{Wei-Liang Liu}\IEEEauthorblockA{National Yang Ming Chiao Tung University\\ rlooooong@gmail.com}
\and
\IEEEauthorblockN{Cheng-Yen Tsai}\IEEEauthorblockA{National Taiwan University of Science and Technology\\ nemo00407@gmail.com}}
\maketitle

\begin{abstract}
Model signing authenticates the full-precision weights $W$ an LLM is released with, yet users run a \emph{quantized} artifact $Q$ whose producing transformation is not uniquely determined under current disclosure. We separate the non-environment inputs from the execution environment $E$ and measure a quantity that, to our knowledge, has not been measured before: the \emph{Independent Verifiability Rate} (\ivr{}), the fraction of a popularity-sampled GGUF ecosystem that a third party can re-derive bit-for-bit. Across the 200 most-downloaded Hugging Face GGUF repositories (198 in scope), no artifact or model card records the build environment, so under a recorded-inputs protocol the \ivr{} is $0/198$. Yet the artifacts are not irreproducible. On four in-sample repositories that disclose their inputs, the card's information alone reproduces no \texttt{Q4\_K\_M} file, but two coordinates the card omits---the build's floating-point-contraction setting and, for hybrid models, the per-tensor quantization recipe---re-derive $7/8$ target--quant files bit-for-bit on x86 ($5/8$ on arm64); the fp16 re-derives from the provider's checkpoint, but the calibration imatrix must be shipped and hash-bound, since recomputing it did not match. An older release additionally pins the instruction set and did not re-derive. Reproducing the divergence reported with \texttt{llama.cpp} PR~\#25353, we show that $E$ is a necessary coordinate and that one compiler setting yields a unique output across the tested platforms at negligible quality cost ($+0.06\%$ perplexity). As a security characterization, benign floating-point non-determinism separates from uniformly placed nibble-flip tampering above a detection floor that grows with the environments a verifier must allow, but a targeted embedding edit evades the statistic, so the proposed protocol rests on exact re-derivation; adversarial-fp16 attacks lie outside it. We release the audit data and a re-runnable harness.
\end{abstract}

\section{Introduction}\label{sec:intro}
Reproducible builds turned ``trust the publisher'' into ``verify the artifact''~\cite{lamb2022reproducible}. 
A decade of engineering made over 96\% of Debian's packages (bookworm/amd64) \emph{reproducible} in its continuous rebuild tests~\cite{reproduciblebuilds}: building a
package twice under deliberately varied environments yields the \emph{same bytes}, so an independent rebuild can be
checked against the released artifact. Source authentication thereby becomes an end-to-end integrity
check. ML supply chains are adopting the same primitives---model signing 
(Sigstore Model Signing~\cite{sigstoremodeltransparency,openssfmodelsigning}, NVIDIA NGC~\cite{nvidia2025ngcsigning}) 
authenticates the hashes of the released files, and lifecycle-attestation efforts (the IETF 
\texttt{ai-model-lifecycle-attestation} draft~\cite{sharif2026ietfattest}, CycloneDX 
ML-BOM~\cite{cyclonedxmlbom}) record provenance.

There is a gap these primitives do not close. Users of local-inference ecosystems typically do not run the signed full-precision
weights $W$; they run a \emph{quantized} artifact $Q$---e.g.\ a 4-bit GGUF widely distributed through 
\texttt{llama.cpp}~\cite{llamacpp}/\texttt{ollama} ecosystems. $Q$ is produced by a transformation 
$\Tcal$ whose inputs go far beyond $W$: the quantization recipe $\theta$, the calibration input
$C$ (an \emph{importance matrix}, ``imatrix'': per-weight calibration statistics computed by running
the model over a corpus), the execution environment $E$ (quantizer release and build, compiler, architecture, floating-point settings),
and non-weight components $M$ (tokenizer, chat template, metadata). Writing the non-environment 
inputs as $\Xbar=(W,\theta,C,M)$, we observe that under currently disclosed specifications $Q$ is 
not uniquely determined: signing $H(W)$ says nothing about which $Q$ a user should accept, and 
re-running $\Tcal$ on a different platform can produce different bytes.

We ask a question that, to our knowledge, the ML supply chain has not measured:
\emph{of the GGUF repositories in our popularity-sampled ecosystem, what fraction expose an artifact that a third party can independently re-derive bit-for-bit?} We call this the Independent Verifiability Rate,
\[
\ivr{} = \frac{\#\,\{\text{bit-exact re-derivable artifacts}\}}{\#\,\{\text{evaluated artifacts}\}}.
\]
We fix \ivr{} to bit-exact re-derivability; attestation-based binding (e.g.\ by a Trusted Execution
Environment, TEE: hardware that signs a statement of which binary ran on which inputs) is a separate
mechanism (\S\ref{sec:protocol}), not part of this metric.

\paragraph{Our answer, under a recorded-inputs protocol, is $0/198$} None of the 198 in-scope repositories among the 200 most-downloaded exposes enough to satisfy our re-derivation protocol. That means the disclosed inputs are insufficient for \emph{our} verifier, not that the artifacts are mathematically unreproducible. Part of this is structural: GGUF has no field for the build environment, so the metadata half of our checklist cannot pass, and the census measures what the format and current disclosure practice allow; the model-card scan and a re-derivation attempt test whether disclosure elsewhere closes the gap. A verifier can also succeed by \emph{searching} environments instead of reading a record; the trial of \S\ref{sec:m1} shows this works for recent releases on the publisher's architecture and fails for an older one, so the search-based rate is not zero but is unmeasured beyond our trial. We then explain \emph{why}, \emph{what it costs}, and \emph{how to close it}, as three measurements feeding a verification protocol (Figs.~\ref{fig:ivr}--\ref{fig:security}):
\begin{itemize}
\item \textbf{M1 (\S\ref{sec:m1}):} the recorded-inputs \ivr{} is $0/198$; no artifact or model card records the build environment---compiler, architecture, floating-point settings (19/198 cards do state the \texttt{llama.cpp} release). On four in-sample repositories, re-deriving the deployed files from the card's disclosed inputs (release, publisher's own fp16, imatrix, stock preset) reproduces none of the \texttt{Q4\_K\_M} files. Supplying two coordinates the card omits---the build's FP-contraction setting and, for hybrid models, the per-tensor recipe---re-derives $7/8$ target--quant files bit-for-bit on x86, and for a pilot repository the fp16 itself re-derives bit-for-bit from the provider's checkpoint.
\item \textbf{M2 (\S\ref{sec:m2}):} reproducing the floating-point-contraction divergence reported with PR~\#25353, we find the execution environment $E$ empirically necessary under the studied model; one compiler setting attains $B{=}0$ for the tested types across the tested architectures and compilers at negligible measured quality cost ($\Delta$PPL $+0.06\%$), yet is not the \texttt{llama.cpp} default, and the divergence lives in the weight representation, not inert metadata.
\item \textbf{M3 (\S\ref{sec:m3}):} a security \emph{characterization}---our evaluated statistics separate benign floating-point non-determinism from uniformly placed nibble-flip tampering above a detection floor that grows with the range of build environments a verifier must allow (up to about $0.2\%$ of weights under our five-pair calibration, and more as that range widens); a targeted embedding edit evades them, and adversarial-fp16 attacks lie outside re-derivation.
\item \textbf{Protocol (\S\ref{sec:protocol}):} An ambiguity-free transformation ($B{=}0$) is the natural precondition for trustless re-derivation; uniform partial spot-checking against an $m$-unit tampering within $N$ units needs $\Theta(N/m)$ openings to hold a fixed detection probability.
\end{itemize}

\paragraph{Contributions}
(i) To our knowledge, the first popularity-sampled measurement of GGUF independent verifiability, decomposed into what is \emph{obtainable} (source-reconstructible, $148/198$) versus what is \emph{recorded} (the build environment: $0/198$, a structural consequence of the format having no field for it). The constructive core of the paper is a trial re-derivation on named public artifacts (five repositories across three model families and three releases, \texttt{b3772}--\texttt{b10603}): it locates the failure in the quantization step and names the card-absent coordinates that close it. With those supplied, $7/8$ target--quant files of the four Qwen repositories and both files of the Granite one re-derive bit-for-bit, and for a pilot repository the fp16 already re-derives from the provider's checkpoint.
(ii) A compact vocabulary---$\Xbar/E$, the admissible set $\Acal$, and a $0/1$ ambiguity indicator $B$---that makes precise what a complete re-derivation specification must fix; under it the necessity of recording the environment $E$ is immediate, and we witness it empirically (\S\ref{sec:m2}). We use this as scaffolding for the measurements, not as a result in itself.
(iii) A security characterization of the $B{>}0$ regime: how much one-step tampering hides in benign environment variation (up to about $0.2\%$ of weights under our five-pair calibration, growing with the range of environments allowed), what a max statistic adds against sparse large edits, and a delineated threat boundary. In our sweeps, a uniform scale perturbation is caught before it is harmful and adaptive placement hides only a bounded low-magnitude reservoir, but a \emph{targeted} embedding attack evades the statistic (\S\ref{sec:m3}); this is why the protocol relies on exact re-derivation rather than detection.
Along the way we independently reproduce the floating-point-contraction divergence of PR~\#25353, add a weight-space localization and a size trend, and confirm its perplexity report with a paired estimate.
We also provide a re-runnable audit harness and our analysis tools (GGUF header-probe, byte-region diff, and detector), distinct from the external re-derivation engine we instrument.

\paragraph{Prior art we build on, not claim} Deterministic GGUF re-derivation is existing engineering (ggufpacker, which pins and packages quantization inputs, publishes cross-machine bit-exact derivation attestations for quants built with the fix, and reports that for quants built before it \texttt{Q8\_0} packs are exact while most k-quants need a correction delta and cannot be attested~\cite{ggufpacker}; llama.cpp PR~\#25353~\cite{llamacppfpcontract}), as is the observation that k-quant outputs diverge across platforms because of floating-point contraction. The adversarial-fp16 attack that bounds our threat model is also prior work (Mind the Gap~\cite{egashira2025mindthegap}). Our contribution is to \emph{measure} how far current disclosure falls short of what these tools presuppose, \emph{name} what a manifest must bind for re-derivation to succeed, and \emph{delineate} what re-derivation can and cannot cover. The trial re-derivation (\S\ref{sec:m1}, Table~\ref{tab:trial}) also makes explicit a coordinate that a packager pins silently: on named public artifacts, the card's quant type (``\texttt{Q4\_K\_M}'') underdetermines the publisher's \emph{per-tensor recipe}, a discrete input that the card does not disclose and that no stock preset regenerates; it must be supplied alongside the environment before re-derivation succeeds.

\section{Background \& Threat Model}\label{sec:bg}
\paragraph{Reproducible builds} The reproducible-builds movement made the integrity of a running
binary independently checkable: recompile from source in a recorded environment and confirm
bit-for-bit that it matches the shipped binary, supported by tooling that locates non-determinism
(\texttt{diffoscope}~\cite{diffoscope}, \texttt{reprotest}~\cite{reprotest}) and a network of
independent rebuilders. Much of the residual non-determinism it confronts is \emph{normalizable
metadata}---embedded timestamps, build paths, filesystem ordering, locale---which is canonicalized
away; for the rest of the build environment, the practice is to either make it irrelevant to the
output or record it so the artifact can be rebuilt exactly (e.g.\ Debian's \texttt{.buildinfo}
files)~\cite{rbrecording}. A central empirical question of this paper is how quantization fares under
this recipe. Normalization cannot help---the non-determinism lies in the numeric content of the
weights (\S\ref{sec:m2})---so the environment must be made irrelevant (one compiler setting achieves this
on the tested platforms, \S\ref{sec:m2}) or recorded, which no artifact in our sample fully does (\S\ref{sec:m1}).

\paragraph{Model signing \& attestation} The ML ecosystem is adopting the signing half of this picture. Sigstore Model Signing~\cite{sigstoremodeltransparency,openssfmodelsigning} and NVIDIA NGC~\cite{nvidia2025ngcsigning} sign a manifest of the released files' hashes (the weights $W$ and whatever else the repository ships), binding an artifact to a publisher's identity. Provenance and lifecycle efforts go further in describing the pipeline: the IETF \texttt{ai-model-lifecycle-attestation} draft (an individual Internet-Draft)~\cite{sharif2026ietfattest} records a source-weight hash, a quantized-artifact hash, a quantization-tool hash, and per-layer error tolerances (a check of the kind a sub-threshold targeted edit evades, \S\ref{sec:m3}); CycloneDX ML-BOM enumerates model components~\cite{cyclonedxmlbom}; and work on attesting LLM pipelines gates artifact promotion on verifiable training and release claims~\cite{tan2026attesting}. None of the specifications we examine requires binding the calibration input $C$ (the imatrix) as an input of the quantization (a file-manifest signature covers an imatrix the repository happens to ship, but does not tie it to the artifact derived from it), and $E$ is bound at most indirectly: a hash of the quantization tool identifies the quantizer binary but does not describe it, so a verifier on another platform cannot re-create an equivalent build. These claims establish provenance and tool identity, but do not by themselves let an independent verifier establish that the artifact is the faithful output of the signed source.

\paragraph{The transformation} A deployed GGUF is $Q=\Tcal(\Xbar;E)$, where $\Xbar=(W,\theta,C,M)$ collects the weights, recipe, calibration input, and non-weight components, and $E$ is the environment. The calibration input is itself a pipeline output, $C=\mathcal{C}(W,D,E_C)$ over a calibration corpus $D$ and its own environment $E_C$, which is why binding $C$ by hash (as we recommend) matters. Deterministic re-derivation of GGUF outputs is existing engineering---ggufpacker~\cite{ggufpacker} packages the inputs, and \texttt{llama.cpp}~\cite{llamacpp} PR \#25353~\cite{llamacppfpcontract} removes a floating-point non-determinism---and we use these as an \emph{instrument} to measure the boundary they presuppose, not as a contribution we claim. Throughout, $W$ denotes the full-precision artifact actually fed to $\Tcal$---for GGUF, the publisher's released fp16 GGUF; the upstream conversion of the provider's checkpoint to fp16 is a separate transformation stage, which we test once (\S\ref{sec:m1}) and otherwise leave out of scope.

\paragraph{Block quantization in one paragraph} A GGUF k-quant~\cite{kquants} stores weights in fixed-size blocks, grouped into 256-weight \emph{superblocks}: each block holds low-bit integer codes for its weights---4-bit codes are packed two per byte as \emph{nibbles}---plus per-block \emph{scale} (and minimum) parameters, so that a dequantized weight is approximately $\text{scale}\times\text{code}$ and the spacing between adjacent representable values (one \emph{quantization step}) is set by the scale. The scales are not read off directly: a small floating-point optimization (the \emph{scale search}) picks the scale minimizing the block's reconstruction error, weighted by the imatrix when one is supplied. Two consequences carry the rest of the paper: the scale search is where floating-point environment differences can change the chosen scales and hence the stored bytes (\S\ref{sec:m2}), and altering a stored code moves the corresponding weight by at least one step, whereas perturbing a scale moves every weight in its block by sub-step amounts (\S\ref{sec:m3}).

\paragraph{Threat model} In the signed supply chain that model-signing efforts target, the model provider signs $H(W)$, while a
quantizer---usually a \emph{different}, third party---publishes $Q$ and implicitly asserts $Q=\Tcal(\Xbar;E)$;
the current ecosystem does not generally provide an independently verifiable cryptographic binding 
between the signed source and the transformation result. The recommended protocol we analyze 
(\S\ref{sec:protocol}) addresses this gap by having the transforming party sign a manifest 
committing to the inputs it used, $(H(W),\theta,C,E,M)$, so that any third party can recompute and 
check. The manifest binds $Q$ to the inputs the transforming party commits to, and for GGUF that party usually supplies three of them itself: $W$ (its own fp16 conversion), $C$ (its own imatrix), and $M$. Re-derivation therefore certifies the fp16$\to Q$ step. The conversion that produced $W$ and $M$ can be re-derived too---for the pilot repository of \S\ref{sec:m1} we reproduce the published fp16 bit-for-bit from the model provider's checkpoint---which extends the chain to the provider's weights; a committed $C$, however, shows which imatrix was used, not that it is benign. Our security claim is regime-split (Table~\ref{tab:regimes}); Appendix~\ref{app:assume} states the assumptions A1--A7 and the exact boundary of the claim.

\begin{table}[t]\centering\small
\caption{Security claim by regime. A \emph{cheating quantizer} starts from a benign $W$ but ships $Q\neq\Tcal(\Xbar;E)$; $B_{\Ecal'}{=}0$ means the honest output is unique over the environments rebuilders use (\S\ref{sec:formal}).}\label{tab:regimes}
\begin{tabular}{@{}p{0.29\linewidth}p{0.64\linewidth}@{}}
\toprule
Threat and regime & What re-derivation gives a verifier \\
\midrule
Cheating quantizer, $B_{\Ecal'}{=}0$ & Sound detection: a hash mismatch implies tampering, with no false positives (A2, A4). \\
Cheating quantizer, $B_{\Ecal'}{>}0$ & A nonzero diff no longer implies tampering; uniformly placed nibble flips are only \emph{characterized}, by a measured detection floor, and a targeted edit can evade the statistic (\S\ref{sec:m3}). \\
Cheating quantizer via $C$ (a chosen imatrix, $Q{=}\Tcal$) & Nothing beyond binding: the manifest fixes which $C$ was used and re-derivation reproduces $Q$; vetting $C$ is outside re-derivation (\S\ref{sec:bg}). \\
Malicious $W$ (adversarial fp16, $Q{=}\Tcal$) & Nothing: the quantization is honest, so re-derivation reproduces the artifact (out of scope, A1). \\
\bottomrule
\end{tabular}
\end{table}

\paragraph{Why not simply sign $Q$} A natural objection is that the quantizer could simply sign $H(Q)$. This proves \emph{provenance}---that $Q$ came from whoever holds the signing key---but not \emph{faithfulness} to the signed source $W$: a cheating quantizer signs its own backdoored $Q$ just as validly, and, because the quantizer is usually not the party that signed $W$, no signature links the two. Binding $Q$ to $W$ requires \emph{recomputing} $Q$ from $W$ and comparing hashes over the verifier's own recomputation. This is why $B{=}0$ matters: when two honest recomputations of the same inputs can diverge, there is no canonical $H(Q)$ for a verifier to compare against, and a signature over one particular $Q$ does not, by itself, establish faithfulness to $W$.

\section{Transformation Ambiguity}\label{sec:formal}
Both the measurement (\S\ref{sec:m1}) and its explanation (\S\ref{sec:m2}) need one piece of vocabulary, which we fix before turning to the data. Split non-environment inputs from the environment: $\Xbar=(W,\theta,C,M)$, full input $X=(\Xbar,E)$,
with $E$ ranging over $\Ecal$, the space of admissible execution environments. The admissible-output set is $\Acal(\Xbar)=\{\Tcal(\Xbar;E):E\in\Ecal\}$ and the
ambiguity budget $B(\Xbar)=\sup_{Q_1,Q_2\in\Acal(\Xbar)} d(Q_1,Q_2)$, where $d$ is the discrete metric on serialized artifacts ($d{=}1$ iff the bytes differ), so
$|\Acal(\Xbar)|{=}1 \Leftrightarrow B{=}0$---exactly the uniqueness of $H(Q)$ that hash comparison requires; under this metric $B$ is simply a 0/1 indicator of non-uniqueness. The same supremum over a subset $\Ecal'\subseteq\Ecal$ defines $B_{\Ecal'}(\Xbar)$: disclosing a coordinate of $E$ shrinks the environments a verifier must consider, so what verification needs is $B_{\Ecal'}{=}0$ over the environments rebuilders actually use---met by exactly replicating a fully recorded $E$, or, portably, by a setting under which all of them agree. When we say a setting ``attains $B{=}0$'' we mean $B_{\Ecal'}{=}0$ for $\Ecal'$ the tested platforms with that setting; Observation~\ref{prop:e} concerns the full $\Ecal$. (Where the \emph{size} of a divergence matters rather than its existence, \S\ref{sec:m3} measures perturbations in dequantized weight space.) We write $X^*$ for a \emph{minimal complete specification}---a smallest set of disclosed coordinates making the output unique ($|\Acal|{=}1$); Observation~\ref{prop:e} shows every complete specification must constrain $E$, which we abbreviate $E\in X^*$; characterizing $X^*$ in full is left open. The \emph{natural-variance distribution} $\Rn$ denotes the empirically
characterized distribution of weight perturbations induced by benign variation in $E$. An observation (Observation~\ref{prop:e}, necessity of $E$,
\S\ref{sec:m2}) and a standard sampling bound (Proposition~\ref{prop:sample}, \S\ref{sec:protocol}) are
stated where they do work; proofs are in Appendix~\ref{app:proofs}.

\section{M1: The Independent Verifiability Rate}\label{sec:m1}
We operationalize the question of \S\ref{sec:intro} as a measurement over the artifacts people actually download.

\paragraph{Definitions} An artifact $Q$ is \emph{bit-exact verifiable} if an independent party, given only the disclosed inputs, can recompute a $Q'$ with $H(Q')=H(Q)$; the \ivr{} is the popularity-sampled fraction of deployed artifacts that are. Verifiability is a property of \emph{disclosure}, not of the artifact in principle: a $Q$ that could in fact be reproduced, were its producer to reveal $(\Xbar,E)$, still counts as non-verifiable while that disclosure is absent. Our census scores a \emph{recorded-inputs} protocol: all four inputs, including the environment, must be disclosed, so that a mismatch is attributable. A verifier may instead \emph{search} plausible environments and accept a hash match; that route needs no record, and the trial below bounds it from below. To separate ``the inputs are hidden'' from ``the environment is hidden,'' we additionally report the weaker \emph{source-reconstructible} rate---whether the weights $W$ and calibration input $C$ are retrievable at all, ignoring $E$. Throughout, for GGUF we take $W$ to be the publisher's released fp16 GGUF (the full-precision artifact that is fed to quantization). For the source-reconstructible rate we report two rules: whether the repository ships such a GGUF, and whether its declared base-model repository holds public full-precision weights. The transformation we re-derive and verify is the quantization step $\text{fp16}\!\to\!Q$, not the upstream conversion of an original framework checkpoint to fp16, which is a separate stage outside our scope.

\paragraph{Sampling frame} Verifiability is not uniform over a randomly chosen model; what matters is what users run, so we sample by popularity. We take the 200 most-downloaded GGUF repositories returned by the Hugging Face API (August 2026, sorted by the API's download count, a trailing 30-day window); two expose no GGUF weight file and fail inclusion, which fixes the denominator at 198 repositories (from 61 distinct publishers), each scored on one representative deployed artifact (below). Because download rankings drift, we freeze the exact repository list and ship it, with the scored snapshot, in the artifact. The scoring is a re-runnable audit (\texttt{census.py -{}-repos-file}) that re-scores the frozen list against live metadata, so that a reviewer can check the $0/198$ rather than trust it (repositories change over time, so a re-run may drift from the snapshot).

\paragraph{Instrument} Downloading 198 multi-gigabyte repositories is unnecessary: what decides verifiability lives in the GGUF header and the repository's advertised provenance. For each repo we query the HF API for the file inventory, select one representative deployed artifact (preferring its \texttt{Q4\_K\_M} file; the first shard for sharded models), and fetch only that GGUF's header via HTTP range requests ($8$--$16$\,MB rather than the full file) to parse its key--value block. We then score a four-input re-derivation checklist: (i) is the fp16/source $W$ retrievable; (ii) is the quantizer and its \emph{version} recorded; (iii) is the calibration imatrix $C$ present and retrievable (tri-state: an artifact with no imatrix record is treated as $C=\varnothing$, i.e.\ fully disclosed rather than a gap---an assumption that misclassifies an imatrix whose record was removed with \texttt{--hide-imatrix}); (iv) is any execution-environment coordinate $E$ recorded. No standardized GGUF field carries either the quantizer version or an environment coordinate (\texttt{general.quantization\_version} records the revision of the quantization format, not the \texttt{llama.cpp} build), so from metadata checks (ii) and (iv) are scored jointly as a single environment/version record (the ``env/version'' bar of Fig.~\ref{fig:ivr}b); the model-card analysis below separates them. Bit-exact verifiability requires all four; source-reconstructibility requires only (i) and (iii). For the studied quantization step, these checks cover the inputs our re-derivation procedure consumes. The recipe $\theta$ is the quant type named in the artifact (its per-tensor types are recorded in the header) plus options that default per release, the release being part of $E$ in our model. The non-weight components $M$ (tokenizer, chat template, metadata) are taken verbatim from the published GGUF container, whose metadata region an environment rebuild reproduces bit-identically apart from the local file paths that \texttt{llama-quantize} records for an imatrix (below and \S\ref{sec:m2}). The quantizer does write \emph{into} that block: $62/198$ artifacts carry \texttt{quantize.imatrix.*} provenance keys---which is also how the census detects imatrix use, alongside the repository file listing---but these record $C$-side provenance and no $E$ coordinate. We treat $C$ as an \emph{obtained} input (retrievable or not); verifying its own generation ($C=\mathcal{C}(W,D,E_C)$, \S\ref{sec:bg}) is a separate stage outside this measurement.

\paragraph{Result: $\mathbf{\ivr{}=0/198}$ under the recorded-inputs protocol} No repository in the frozen set satisfies the recorded-inputs protocol. The source-reconstructible rate is far higher---$148/198$ ($74.7\%$) under the census's rule, which credits public full-precision weights in the declared base-model repository (an upper bound: for a fine-tuned derivative that is the parent model), or $111/198$ ($56.1\%$) if we require the repository itself to ship a full-precision GGUF. For roughly half to three quarters of these models, then, the weights and calibration data are in principle obtainable; what is missing lies elsewhere. The universal gap is the execution environment: no standardized GGUF field records the \texttt{llama.cpp} version, compiler, target architecture, or FP-contraction setting, and $0/198$ artifacts record any $E$ coordinate in metadata (19 cards state the release in prose, below). Table~\ref{tab:census} decomposes the result (Fig.~\ref{fig:ivr}b plots the first half). Beyond the environment/version record that all $198$ lack, $50$ repositories also fail source reconstruction; the remaining $148$ are missing \emph{only} that record. The environment is thus not one gap among several but the universal one, by construction of the format. For the $154$ artifacts whose representative file is a ggml scale-search k- or i-quant, the mechanism of \S\ref{sec:m2} is expected to apply---every scale-search type we tested diverged under default builds, though some of these files come from other ggml-based tools we did not test---so the missing record is not a bookkeeping omission: a rebuild in an unrecorded environment can byte-differ from the honest output, and a mismatch cannot then be attributed (the trial below shows both outcomes). For the other $44$ (23 \texttt{Q8\_0}/\texttt{Q4\_0} files---\texttt{Q8\_0} matched across our platforms at \texttt{b10448}, while \texttt{Q4\_0} quantized with an imatrix uses a weighted scale search we did not test; 8 f16 or vision-projector files our selector picked where no quantized language-model file was preferred; and 13 custom or mixed-precision formats), the necessity of $E$ is not established, and $\ivr{}=0$ reflects the protocol's requirement of a recorded release and environment rather than a demonstrated barrier. Restricting to text and multimodal LLMs leaves the result unchanged ($0/165$). Read the other way, the search-based rate is not zero: in the trial below, three in-sample \texttt{Q4\_K\_M} files re-derive bit-for-bit from disclosed inputs alone on the publisher's architecture, so at least $3/198$ representative files are re-derivable by a verifier willing to guess the environment; measuring that rate across the ecosystem requires re-deriving every obtainable repository, which we leave open.

\begin{table}[t]\centering\small
\caption{Census decomposition over the 198 repositories. Restricting to text and multimodal LLMs by Hugging Face pipeline tag (excluding speech, video, image-generation, embedding, and classification models) leaves $\ivr{}=0/165$.}\label{tab:census}
\begin{tabular}{@{}lr@{}}
\toprule
\multicolumn{2}{@{}l}{\emph{What the repositories lack}}\\
Environment/version record & 198\\
Retrievable fp16 source (base-model rule) & 33\\
Retrievable imatrix (2 also lack fp16) & 19\\
Only the environment/version record & 148\\
\midrule
\multicolumn{2}{@{}l}{\emph{Representative file scored}}\\
Scale-search k/i-quant & 154\\
\texttt{Q8\_0}/\texttt{Q4\_0} (no scale search without an imatrix) & 23\\
f16 or vision projector & 8\\
Custom or mixed-precision format & 13\\
\bottomrule
\end{tabular}
\end{table}

\paragraph{Model cards} A metadata checklist could miss provenance stated in prose (model-card completeness on Hugging Face is known to be uneven~\cite{liang2024modelcards}), so we checked the human-written model cards in two ways. First, an adversarial audit of 12 cards, selected from the top-60 census stratum as the nine highest-download repositories whose only gap is the environment/version record plus three with deeper gaps. One independent LLM agent per card was instructed to \emph{refute} the ``not verifiable'' label by finding a disclosed exact quantizer version, build flags, or environment coordinate (a card saying merely ``quantized with \texttt{llama.cpp}'' does not count); the agent saw the scorer's label, which may anchor it, whereas the deterministic scan below does not depend on it. It confirmed $12/12$ (on source availability it disagreed with the scorer for three repositories, two in one direction and one in the other, which we report but do not apply); none of these 12 cards states an exact \texttt{llama.cpp} version, one states partial build flags, and several advertise a \emph{custom} quantizer. Second, because a 12-card sample can miss publisher-specific conventions, we pattern-scanned all $198$ cards for release and commit identifiers and manually classified every hit: $19/198$ cards---18 of them from two publishers' standard templates, and none from a publisher in the audited sample---state the \texttt{llama.cpp} release used for quantization. The scan finds no card stating a compiler, target architecture, or FP-contraction setting (60 cards mention platforms such as Apple Silicon, Metal, or AVX; on manual review all refer to inference or performance, none to the quantization build). The quantizer release is thus sometimes disclosed in prose; beyond one card's partial build flags, the build environment is disclosed nowhere we look. We scope the finding precisely: the build coordinates of $E$ are absent from the GGUF metadata and the model cards of all $198$ artifacts---``absent in the locations we audit,'' not a claim that the producer never possessed it.

\paragraph{Source availability is not verifiability---and what closes the gap} The gap is not merely bureaucratic, and these artifacts are not irreproducible: what is disclosed is insufficient, and we can say exactly what is missing. We take four in-sample \texttt{bartowski} repositories---\texttt{Qwen2.5-7B-Instruct} (release \texttt{b3772}) and \texttt{Qwen3.5-\{0.8B,2B,4B\}} (release \texttt{b9222})---each of which publishes its own full-precision GGUF, its imatrix, and \texttt{Q4\_K\_M}/\texttt{Q8\_0} files, and whose card names the \texttt{llama.cpp} release used. From exactly those disclosed inputs we re-derive the deployed files, comparing per-tensor \texttt{sha256} of the tensor data (Table~\ref{tab:trial}); the key--value region is excluded because \texttt{llama-quantize} writes the local imatrix path into it (above), which a whole-file hash would need canonicalized. With only the card's information---the named release, the published imatrix, the stock preset---no \texttt{Q4\_K\_M} file re-derives, and only a minority of its tensors match ($140/335$ and $141/339$ for the two families); the one file among these eight that does re-derive from the card alone is the dense 7B's \texttt{Q8\_0}, whose stock preset equals the publisher's recipe. The gap decomposes into two coordinates the card never records, neither of which is the weight data. \textbf{(1)~The per-tensor recipe:} for the hybrid Qwen3.5 models the stock \texttt{Q4\_K\_M} preset does not reproduce the publisher's type map, which keeps the SSM tensors at F32 and bumps others to \texttt{Q5\_K}/\texttt{Q6\_K}/\texttt{Q8\_0} ($75$--$100$ tensors differ), though the card records only ``\texttt{Q4\_K\_M}''. Pinning the published types (a few hundred discrete labels) closes this; for the dense 7B the stock preset already matches, and no recipe is needed. The recipe gap spans the publisher's whole lineup: for the 0.8B, re-deriving \emph{all 23} published quant types---k-quants, legacy quants, i-quants, and the embedding/output-override variants---reproduces $0/23$ from the card's named type alone but $23/23$ bit-for-bit once the recipe is supplied (on x86, whose default build is uncontracted). The pattern is not specific to this family or release. A model from a different provider and family, \texttt{granite-4.2-3b} (IBM Granite, a distinct architecture, quantized by the same publisher at the newer release \texttt{b10603}), reproduces its \texttt{Q4\_K\_M} in only $263/363$ tensors from the card alone but $363/363$ bit-for-bit with the recipe; there the publisher bumps the attention tensors to \texttt{Q5\_K}/\texttt{Q6\_K}, a different hidden recipe from Qwen3.5's F32-kept SSM tensors. Its \texttt{Q8\_0} reproduces from the card alone. \textbf{(2)~The build's floating-point contraction}, which is architecture-conditioned: with the recipe pinned, the Qwen3.5 \texttt{Q4\_K\_M} re-derives \emph{bit-for-bit} on x86 (default \emph{and} \texttt{-ffp-contract=off}) and on arm64 with \texttt{-ffp-contract=off}, but not on an arm64 default build. At \texttt{b9222} the quant kernels are compiled portably, so baseline x86-64 (no FMA) is uncontracted by default, while arm64 (NEON FMA, contracted by clang's default) needs PR~\#25353; once contraction is off the \texttt{Q4\_K\_M} output is bit-identical across the two architectures. With both coordinates supplied, $7/8$ of the target$\times$quant files re-derive bit-for-bit on x86, and $5/8$ on arm64: the 0.8B and 2B \texttt{Q8\_0} keep one residual tensor there whose all-zero blocks store byte-different but numerically identical codes (the arm64 and x86 kernels round $0/0$ differently), a residual that contraction does not touch---bit-exactness is strictly stronger than numerical equality, and a cross-architecture rebuilder must record the architecture or canonicalize zero blocks. The exception is the 7B \texttt{Q4\_K\_M}: its release \texttt{b3772} compiles the quant kernels with \texttt{-march=native}, so the environment additionally pins the SIMD ISA, and it matches none of our six configurations (two compilers on arm64 and one on x86, each with contraction on and off; best $170/339$). We did not find the publisher's environment; a different compiler version or build host may add coordinates beyond the ISA. A pilot on a further repository from the same publisher (below the census's download cutoff), \texttt{Qwen2.5-1.5B-Instruct} (also \texttt{b3772}), behaved the same way: its four rebuilds (two platforms, two builds, with its published imatrix) gave three distinct honest outputs, none matching the published file. The conversion hop, by contrast, re-derives exactly: for that 1.5B repository, converting the model provider's checkpoint (safetensors and model card, Hugging Face revision \texttt{989aa79}) with the named converter reproduces the published fp16 bit-for-bit, on a different machine and twice over (the newer converter at \texttt{b10448} reproduces the tensor data but writes different metadata); we verified this hop on that one repository only. Given those disclosed inputs, the failure to verify sits in the quantization step, and is localized to a few coordinates the model card does not disclose: for a recent release, the build's contraction setting (for which the GGUF container has no field) and the per-tensor recipe (present in the deployed file's header, but not regenerated by the card's named preset), both small and disclosable. An older release additionally pins the instruction set, and, as the next paragraph shows, the calibration input must be obtained and hash-bound rather than recomputed.

\begin{table}[t]\centering\footnotesize\setlength{\tabcolsep}{3pt}
\caption{Trial re-derivation: per-tensor bit-exact matches (tensor count in parentheses). \emph{card}=named release + published imatrix + stock preset; \emph{+rec}=card plus the deployed file's per-tensor type map. On x86 the default build is already uncontracted, so its two builds coincide. $^*$=fully bit-exact; ---=the stock preset already equals the recipe; $^\dagger$=one residual tensor whose all-zero blocks store byte-different, numerically identical codes; $^\ddagger$=\texttt{b3772} builds the kernels \texttt{-march=native} (SIMD ISA pinned; the arm64/gcc-15 builds, not shown, equal arm64/clang; x86 default $170$, contraction off $141$); n/t=not tested; $\neq$=tensor-data hash differs in all four rebuilds (three distinct honest outputs). No model card records either coordinate.}
\label{tab:trial}
\begin{tabular}{@{}llccc@{}}
\toprule
 & & arm64 default & arm64 contr.\ off & x86 (either) \\
target (tensors) & quant & card\,/\,+rec & card\,/\,+rec & card\,/\,+rec \\
\midrule
Qwen3.5-0.8B (335) & \texttt{Q4\_K\_M} & 140\,/\,193 & 260\,/\,\textbf{335}$^*$ & 260\,/\,\textbf{335}$^*$ \\
Qwen3.5-2B (335)   & \texttt{Q4\_K\_M} & 140\,/\,193 & 260\,/\,\textbf{335}$^*$ & 260\,/\,\textbf{335}$^*$ \\
Qwen3.5-4B (441)   & \texttt{Q4\_K\_M} & 184\,/\,252 & 341\,/\,\textbf{441}$^*$ & 341\,/\,\textbf{441}$^*$ \\
Qwen3.5-0.8B (335) & \texttt{Q8\_0}   & 298\,/\,334$^\dagger$ & 298\,/\,334$^\dagger$ & 299\,/\,\textbf{335}$^*$ \\
Qwen3.5-2B (335)   & \texttt{Q8\_0}   & 298\,/\,334$^\dagger$ & 298\,/\,334$^\dagger$ & 299\,/\,\textbf{335}$^*$ \\
Qwen3.5-4B (441)   & \texttt{Q8\_0}   & 393\,/\,\textbf{441}$^*$ & 393\,/\,\textbf{441}$^*$ & 393\,/\,\textbf{441}$^*$ \\
Qwen2.5-7B (339)   & \texttt{Q8\_0}   & \textbf{339}$^*$\,/\,--- & \textbf{339}$^*$\,/\,--- & \textbf{339}$^*$\,/\,--- \\
Qwen2.5-7B (339)   & \texttt{Q4\_K\_M} & 141\,/\,---$^\ddagger$ & 141\,/\,---$^\ddagger$ & 170\,/\,---$^\ddagger$ \\
granite-4.2-3b (363) & \texttt{Q4\_K\_M} & n/t & n/t & 263\,/\,\textbf{363}$^*$ \\
granite-4.2-3b (363) & \texttt{Q8\_0}   & n/t & n/t & \textbf{363}$^*$\,/\,--- \\
Qwen2.5-1.5B pilot & \texttt{Q4\_K\_M} & $\neq$\,/\,---$^\ddagger$ & $\neq$\,/\,---$^\ddagger$ & $\neq$\,/\,---$^\ddagger$ \\
\bottomrule
\end{tabular}
\end{table}

\paragraph{The calibration input is a further unverifiable hop} Re-derivation above used the publisher's \emph{own} imatrix, taken as given. Can a verifier instead recompute it? We regenerate the 0.8B imatrix from the fp16 and the calibration corpus its card links (identical \texttt{sha256}) at the stated release. The structure re-derives exactly---the entry set and all $186$ entries' token \emph{counts} match---but the accumulated importance values do not: $0/186$ are bit-identical, differing by a median relative $5.5\times10^{-5}$ from the published imatrix on both an arm64 and an x86 CPU (which themselves differ by $2.2\times10^{-5}$), while \texttt{-ffp-contract=off} leaves the arm64 imatrix unchanged. The sensitivity is thus to the architecture rather than to contraction, and since no CPU build we tested matches the published imatrix, it was presumably produced on an accumulation path we did not test. The difference is not inert: re-quantizing with our recomputed imatrix instead of the published one---at the recipe and \texttt{-ffp-contract=off} environment that otherwise reproduce the file bit-for-bit---changes $141/335$ tensors. A verifier must therefore \emph{obtain} the exact imatrix and bind it by hash, as our protocol does (\S\ref{sec:protocol}); recomputing $C$ from the same disclosed corpus yields a different $C$, and hence a different $Q$. Calibration-input generation is thus a third hop in the chain, alongside the fp16 conversion and the quantization itself, and the second we find to be environment-sensitive.

\begin{figure}[t]\centering
\includegraphics[width=\linewidth]{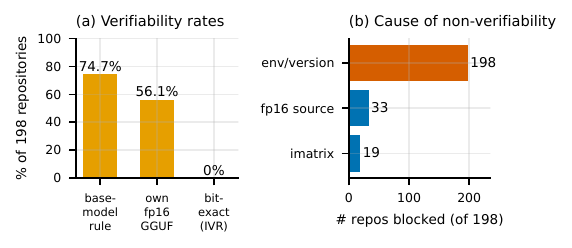}
\caption{(a) Source-reconstructible rates under the two rules (both ignore $E$) and the Independent Verifiability Rate under the recorded-inputs protocol; (b) the causes of non-verifiability. Both over the 198 GGUF-bearing repositories of the top 200, from metadata and model cards.}\label{fig:ivr}
\end{figure}

\section{M2: Locating the Ambiguity}\label{sec:m2}
If $0/198$ artifacts record $E$, the natural rejoinder is that $E$ may not matter---that running the same recipe on the same weights yields the same bytes regardless of environment, as a deterministic pure function would. We test this directly. We build \texttt{llama.cpp} at the pinned release build \texttt{b10448} on two platforms---arm64/clang (Apple~M4) and x86\_64/gcc (Linux), the two (architecture, compiler) pairs tested throughout---each in a \emph{default} and a \texttt{-ffp-contract=off} variant (PR~\#25353). On each we quantize one fixed fp16 GGUF (the f16 published by \texttt{bartowski/Qwen2.5-1.5B-Instruct-GGUF}, the pilot repository of \S\ref{sec:m1}; \texttt{sha256} identical on both machines; no imatrix, $C{=}\varnothing$) into four types: the scale-search k/i-quants \texttt{Q4\_K\_M}, \texttt{Q5\_K\_M}, and \texttt{IQ4\_XS}, plus \texttt{Q8\_0} as a control that carries no scale search. (Table~\ref{tab:setup} lists every experimental setup.) Same-binary, same-machine quantization is bit-deterministic ($16/16$ configurations---2 platforms $\times$ 2 builds $\times$ 4 types, each re-run 10 times with a single \texttt{sha256}), as expected; that determinism is the baseline, not the finding. The cross-platform divergence below, its floating-point-contraction cause, and the flag that removes it were reported alongside ggufpacker and PR~\#25353~\cite{ggufpacker,llamacppfpcontract}; we reproduce them independently, as the empirical witness for Observation~\ref{prop:e}, and add a weight-space localization, a size trend, and a paired quality check.

\begin{table}[t]\centering\footnotesize
\caption{Experimental setups. Models are Qwen2.5-Instruct unless stated; builds are CPU-only \texttt{llama.cpp} in a default and a \texttt{-ffp-contract=off} variant on arm64/clang (Apple~M4) and x86\_64/gcc (Linux). At \texttt{b10448} the ggml quant kernels compile without \texttt{-march=native}, so the x86 default build's quant path is baseline x86-64 (no FMA, hence uncontracted) while arm64 (NEON, FMA available) contracts by clang's default; this is why \texttt{-ffp-contract=off} is the setting that makes the two architectures agree for the scale-search types (the \texttt{Q8\_0} all-zero-block residual at \texttt{b9222}, \S\ref{sec:m1}, is a separate architecture dependence). A \texttt{\_M} preset stores its k-quant tensors in the named format, so \S\ref{sec:m3} writes \texttt{Q4\_K} for \texttt{Q4\_K\_M}.}\label{tab:setup}
\begin{tabular}{@{}p{0.26\linewidth}p{0.15\linewidth}p{0.48\linewidth}@{}}
\toprule
Experiment & Release & Inputs and conditions\\
\midrule
Cross-architecture (\S\ref{sec:m2}) & \texttt{b10448} & 1.5B publisher f16; \texttt{Q4\_K\_M}, \texttt{Q5\_K\_M}, \texttt{IQ4\_XS}, \texttt{Q8\_0}; no imatrix; 10 repeats\\
Size trend (\S\ref{sec:m2}) & \texttt{b10448} & 1.5B, 7B, 14B; \texttt{Q4\_K\_M}; arm64 default vs.\ contraction off\\
Quality (\S\ref{sec:m2}) & \texttt{b10448} & 1.5B; \texttt{Q4\_K\_M}; wikitext-2, 584 chunks\\
Pilot re-derivation (\S\ref{sec:m1}) & \texttt{b3772} & 1.5B publisher f16 and imatrix; \texttt{Q4\_K\_M}; both platforms and builds\\
Trial re-derivation (\S\ref{sec:m1}) & \texttt{b3772}, \texttt{b9222}, \texttt{b10603} & 7B, Qwen3.5-0.8B/2B/4B, granite-4.2-3b; publisher full-precision GGUF and imatrix; \texttt{Q4\_K\_M}, \texttt{Q8\_0} (all 23 types for the 0.8B); stock preset vs.\ pinned recipe; x86\_64 and arm64, default and contraction off\\
Imatrix regeneration (\S\ref{sec:m1}) & \texttt{b9222} & Qwen3.5-0.8B; card-linked corpus; arm64 (both builds) and x86\_64\\
Conversion (\S\ref{sec:m1}) & \texttt{b3772}, \texttt{b10448} & 1.5B provider safetensors to f16; x86\_64; 2 runs each\\
Security (\S\ref{sec:m3}) & \texttt{b10448}, \texttt{b3772} & 1.5B (0.5B for one overlap); \texttt{Q4\_K}, \texttt{Q5\_K}, \texttt{Q6\_K}, \texttt{IQ4\_XS}; six honest pairs for \texttt{Q4\_K}; 5 seeds\\
Adaptive tampering (\S\ref{sec:m3}) & \texttt{b10448} & 1.5B; \texttt{Q4\_K} scale $\times(1{+}\epsilon)$ sweep with perplexity; adaptive placement; targeted \texttt{Q6\_K} embedding edit, two probes\\
\bottomrule
\end{tabular}
\end{table}

\paragraph{Quantization is not a pure function of $(W,\theta)$} With $\Xbar$ held fixed, the two architectures' \emph{default} builds produce byte-differing outputs for $3$ of the $4$ quant types; only \texttt{Q8\_0}, which carries no k-quant scale search, matches across architectures ($1/4$). As PR~\#25353 reports, the divergence traces to compiler-controlled floating-point contraction---fusing $a\cdot b+c$ into a single fused multiply-add---inside the k-quant scale-search loop of \texttt{ggml-quants.c}: the fused and unfused forms round differently near ties, select different block scales, and thereby write different nibbles; our byte localization below is consistent with this. Compiling with \texttt{-ffp-contract=off} collapses all four types to bit-identical across both tested (architecture, compiler) pairs ($4/4$ types), i.e.\ $B_{\Ecal'}(\Xbar){=}0$ for $\Ecal'$ the two tested pairs with contraction disabled. The same pattern holds at an older release and with a real calibration input ($C{\neq}\varnothing$): in the \texttt{b3772} pilot re-derivation of \S\ref{sec:m1} (1.5B), the two \texttt{-ffp-contract=off} outputs are tensor-identical across platforms with and without the imatrix, while the default outputs differ. The environment therefore silently changes the \emph{weights}, and a single compiler flag removes the observed cross-environment divergence. The flag does not, however, \emph{eliminate} $E$: it is itself a recorded $E$ coordinate, and the quantizer version still varies, so it makes recording $E$ cheap---a version plus one build flag---rather than unnecessary (\S\ref{sec:protocol}). The switch also has negligible quality cost in the configuration we evaluated (Qwen2.5-1.5B-Instruct, \texttt{Q4\_K\_M}, wikitext-2~\cite{merity2017pointer}): with each build quantizing the same f16, the \texttt{-ffp-contract=off} artifact scores \emph{perplexity} (the standard next-token quality metric; lower is better) $10.750$ versus $10.744$ for the default build---$\Delta\text{PPL}=+0.006$ ($+0.06\%$; paired standard error $0.005$ over the $584$ evaluation chunks, so not distinguishable from zero)---though the two artifacts byte-differ, in line with PR~\#25353's report of effectively unchanged perplexity. We did not evaluate other models, types, or tasks.

\paragraph{The divergence lives in the weights, not in metadata} Byte-localizing the cross-architecture diff ($C{=}\varnothing$), we find the GGUF key--value/metadata region bit-identical; all divergence falls in the quantized weight tensors, concentrated in the k-quant per-block scale parameters (per-byte difference rate $1.2\times10^{-3}$, the highest of any region) and propagating into the packed nibbles (which hold $73\%$ of the differing bytes by count but at a lower per-byte rate, $3.4\times10^{-4}$). This is precisely where the normalization half of the reproducible-builds recipe fails to transfer: beyond the local file paths recorded with an imatrix (which could be canonicalized), the non-determinism is in the numeric content of the weights themselves, so metadata normalization cannot reach it. That leaves the recipe's other two remedies: making $E$ irrelevant (as the flag does) or recording it (\S\ref{sec:bg}).

\paragraph{The effect persists with scale} The same compiler-controlled divergence appears at 1.5B, 7B, and 14B parameters (Qwen2.5-Instruct f16 at each size), and across the three sizes we measured the fraction of divergent bytes grows with model size (Fig.~\ref{fig:repro}b). The cross-architecture comparison is measured \emph{directly} at 1.5B; at 7B and 14B we report a single-machine proxy: arm64 default vs.\ arm64 \texttt{-ffp-contract=off}. The proxy is exact at 1.5B: at \texttt{b10448} the x86\_64 default output is byte-identical to every \texttt{-ffp-contract=off} output, so the proxy compares the same two byte streams as the cross-architecture pair and the two diffs coincide bit-for-bit.

\begin{observation}[Necessity of $E$, under the studied model]\label{prop:e}
For the \texttt{llama.cpp} k-quant family $\{\Tcal(\Xbar;E):E\in\Ecal\}$, there exists an instance $\Xbar$ with $|\Acal(\Xbar)|\ge2$, hence $B(\Xbar)>0$; so a specification scheme that pins $\Xbar$ but leaves $E$ free is not complete, and any complete specification must constrain $E$, i.e.\ $E\in X^*$.
\end{observation}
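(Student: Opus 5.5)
The observation is existential, so a single witness suffices, followed by a purely definitional step. For the witness we take the instance $\Xbar_0=(W_0,\theta_0,\varnothing,M_0)$ from the cross-architecture experiment above. Here $W_0$ is the publisher's f16 of \texttt{Qwen2.5-1.5B-Instruct}, with identical \texttt{sha256} on both machines. The recipe $\theta_0$ is \texttt{Q4\_K\_M}, there is no imatrix, and $M_0$ is the metadata carried in that container. We also fix two environments that differ in $E$ only: $E_1$ is the \texttt{b10448} default build on arm64/clang and $E_2$ is the \texttt{b10448} default build on x86\_64/gcc. The claim to establish is $\Tcal(\Xbar_0;E_1)\neq\Tcal(\Xbar_0;E_2)$ as serialized byte strings.

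The first step is to justify that each $\Tcal(\Xbar_0;E_i)$ is a well-defined single artifact rather than a random variable. This is exactly what the same-binary, same-machine determinism result gives: ten repeats per configuration each produce one \texttt{sha256}. Each $E_i$ therefore contributes one point to $\Acal(\Xbar_0)$. The second step is to cite the measured byte difference between the two outputs, which the byte localization confirms lies in the scale and nibble bytes of the weight tensors and not in a normalizable field. Together these give two distinct elements of $\Acal(\Xbar_0)$, so $|\Acal(\Xbar_0)|\ge 2$ and hence $d(Q_1,Q_2)=1$ and $B(\Xbar_0)=1>0$ under the discrete metric of \S\ref{sec:formal}.

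The final step is definitional. Suppose some specification scheme pins all of $\Xbar$ but leaves $E$ free over $\Ecal$. Then on $\Xbar_0$ the set of outputs it admits is all of $\Acal(\Xbar_0)$, which has more than one element, so it is not complete. Any complete specification must therefore shrink the environment set to some $\Ecal'\subsetneq\Ecal$, one that excludes $E_1$ or $E_2$ or adds a coordinate such as the contraction flag under which they agree. That is, it must constrain $E$, which is what $E\in X^*$ abbreviates.

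The main obstacle is not the logic but the modeling status of the witness. The proof rests on the assumption that both $E_1$ and $E_2$ lie in the admissible space $\Ecal$ and that the observed difference is attributable to $E$ alone. To secure the first point, I would take $\Ecal$ to include stock builds of a fixed release on mainstream architectures; a verifier cannot reasonably exclude these. To secure the second, I would point to the hash-identical inputs and the fact that the metadata region is bit-identical. This is why the statement is phrased ``under the studied model'' and as an observation rather than a theorem: it is an empirical existence proof, and its soundness reduces to the reproducibility of the reported hashes.
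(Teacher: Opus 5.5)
Your proposal is correct and follows the paper's proof essentially step for step: the same single witness (the 1.5B publisher f16, \texttt{Q4\_K\_M}, $C=\varnothing$, \texttt{b10448} default builds on x86\_64/gcc and arm64/clang), byte-differing outputs giving $|\Acal(\Xbar)|\ge 2$ and $B(\Xbar)>0$, and the same definitional step to $E\in X^*$. The one point the paper handles more carefully is your premise that the two runs ``differ in $E$ only'': they also used different thread counts (8 vs.\ 6) under an otherwise identical pinned build configuration, and the paper rules this out as a confounder (each configuration gives one hash over 10 repeats, and the 8- and 6-thread \texttt{-ffp-contract=off} outputs are byte-identical), whereas your appeal to identical input hashes and a bit-identical metadata region does not by itself exclude a differing invocation option of that kind.
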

\emph{Proof sketch.} Exhibit one instance ($\theta{=}\texttt{Q4\_K\_M}$) and two environments (x86\_64/gcc vs arm64/clang, defaults) whose honest outputs byte-differ (\S\ref{sec:m2}). Appendix~\ref{app:proofs}; the experiments separately identify $\Xbar\cup\{E_{\text{fpcontract}}\}$ as a \emph{sufficient} completion for the tested environments. We claim empirical necessity \emph{for this model}, not that every model requires $E$, and we do not claim to have characterized $X^*$ in full.

\textbf{Scope (where CUDA enters).} \texttt{llama.cpp} quantization runs on CPU, so a CUDA-enabled build executes the same CPU quantization code (we did not test CUDA builds separately). GPU matters for (i) imatrix generation $C$, which \S\ref{sec:m1} already finds architecture-sensitive on CPU and which GPU reduction order may perturb further (GPU builds untested here), and (ii) GPU-native pipelines (GPTQ/AWQ), which may lack $B{=}0$ and fall to attestation.

\begin{figure}[t]\centering
\includegraphics[width=\linewidth]{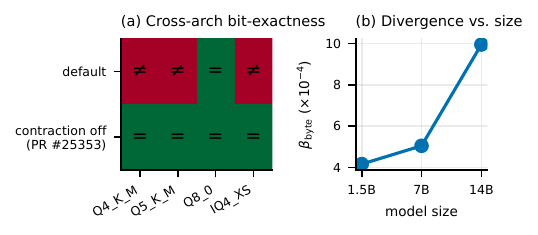}
\caption{(a) Cross-architecture bit-exactness at 1.5B: default builds byte-differ for 3 of 4 types; \texttt{-ffp-contract=off} restores bit-identity ($4/4$). (b) The fraction of divergent bytes $\beta_{\mathrm{byte}}$ (arm64 default vs \texttt{-ffp-contract=off}, the single-machine proxy of \S\ref{sec:m2}; \texttt{Q4\_K\_M}) grows with model size.}\label{fig:repro}
\end{figure}

\section{M3: Security Characterization}\label{sec:m3}
This section is a security \emph{characterization}, not a proposed defense. The question it answers is whether the ambiguity of \S\ref{sec:m2} is dangerous: when $B_{\Ecal'}{>}0$, an honest cross-environment rebuild already yields a nonzero diff, so does benign non-determinism \emph{look like} tampering and leave a verifier unable to tell them apart? We measure the two perturbation distributions in dequantized weight space. For each weight tensor, $\Delta w$ denotes the per-weight difference between the two dequantized artifacts compared, and $\sigma$ the standard deviation of that tensor's dequantized weights; all perturbation magnitudes below are reported as the normalized $|\Delta w|/\sigma$.

\paragraph{Two signatures} Benign non-determinism---the natural-variance distribution $\Rn$---is measured here as the divergence of the arm64 default rebuild from a \emph{reference} output (the \texttt{-ffp-contract=off} build, which both tested platforms reproduce bit-for-bit, \S\ref{sec:m2}; a designated measurement reference, not a canonicity claim), i.e.\ floating-point-contraction variance at a fixed release (\texttt{b10448}) without an imatrix. It touches only a small fraction of weights ($0.22\%$ for the \texttt{Q4\_K\_M} artifact, $0.056\%$ for \texttt{Q5\_K\_M}; Qwen2.5-1.5B-Instruct) and moves each by a \emph{tiny} amount: median $|\Delta w|/\sigma = 0.014$ (99th percentile $0.26$) and $0.0017$ respectively, because the scale drift nudges the affected weights by a sub-step. A cheating quantizer that tampers by flipping nibbles ($Q\neq\Tcal$) moves each touched weight by one quantization step of its block---median $|\Delta w|/\sigma \approx 0.27$ for uniformly placed Q4\_K flips, roughly $20\times$ larger (finer formats have smaller steps: $0.13$ for Q5\_K, $0.06$ for Q6\_K). The per-weight change is thus bounded below by the local step size (\S\ref{sec:bg}), not by how few weights are touched: it cannot be shrunk by making the attack sparser. It can be shrunk by \emph{placement}: in blocks with small scales one step is small ($0.43\%$ of the Q4\_K weights have a one-step move below the threshold used below), so an adaptive attacker confined to such blocks could evade it; the floors below are for uniformly placed flips, and adaptive placement is measured at the end of this section. We validated the per-flip step size against the attack's arithmetic: a synthetic $\pm1$-step flip gives median $0.268$, matching a direct diagnostic on real weights ($2{,}000$ nibble flips applied in place to the 0.5B model: median $0.277$).

\paragraph{Separation and detection floor} The two magnitude distributions barely overlap: the overlap coefficient (histogram intersection, $\sum_b \min(p^{\mathrm{nat}}_b,p^{\mathrm{tamper}}_b)$ over a fixed $|\Delta w|/\sigma$ binning, which compares the magnitudes of changed weights, not their number) is $0.036$ for one-step nibble flips on Q4\_K at 0.5B (Qwen2.5-0.5B-Instruct), and $0.076$ (Q4\_K) and $0.053$ (Q5\_K) at 1.5B. The operative quantity is the \emph{detection floor}: the smallest tampering budget still distinguishable from benign non-determinism. Because the separation is one of magnitude, a simple statistic realizes it---a per-tensor tail fraction, the fraction of weights with $|\Delta w|/\sigma$ above $\tau{=}0.1$. We build each tampered artifact the way a cheating quantizer on another platform would produce it---its honest rebuild plus uniformly placed one-step flips---and score it against the verifier's reference (five seeds; budgets from $0.002\%$ of weights). Against the arm64 contraction pair at a known release, the per-tensor floor (the smallest budget with AUC $\ge0.99$ in every seed, over $165$--$197$ tensors per type) is $0.02\%$ of weights for Q4\_K, $0.005\%$ for Q5\_K, $0.05\%$ for Q6\_K (whose single steps mostly fall below $\tau$), and $0.005\%$ for IQ4\_XS. Each floor lies above the natural boundary-crossing noise, the median per-tensor fraction of weights whose benign drift itself exceeds $\tau$ ($0.018\%$ for Q4\_K, at most $0.002\%$ for the others; distinct from the $0.22\%$ of weights touched at \emph{any} magnitude) (Fig.~\ref{fig:security}b). One budget step above each floor, TPR at FPR $1\%$ is $1.0$ in every seed. We report this as evidence that the distributions are separated, not as a defense we propose; a statistic exploiting an attack's spatial clustering may lower the floor (future work).

\paragraph{The floor depends on the environment pair} Benign differences vary by an order of magnitude across the Q4\_K environment pairs we have (Fig.~\ref{fig:security}c): $0.018\%$ of weights exceed $\tau$ between the arm64 builds, $0.034\%$ with an imatrix, and $0.13$--$0.19\%$ between the native x86 build and the reference; the publisher's 1.5B artifact (the pilot of \S\ref{sec:m1}) against our \texttt{b3772} rebuild sits at $0.13\%$, like the x86 pair. The per-tensor floor rises accordingly, from $0.02\%$ to $0.1\%$. This separation is robust to the tail threshold: repeating the analysis at $\tau\in\{0.05,0.2\}$ preserves the pair ordering and flags no honest artifact, with the model-level threshold scaling as expected ($0.41\%$, $0.19\%$, $0.08\%$ at $\tau{=}0.05,0.1,0.2$) and the max threshold, which does not involve $\tau$, unchanged at $2.36$. A verifier who does not know which environment produced an artifact must allow for all of the pairs it may face: calibrating a model-level rule on the five known-release pairs (flag a model whose pooled tail exceeds $0.19\%$, the largest benign value among them) raises no alarm on those pairs by construction, none on the held-out published artifact, and flags every tampered model from $0.2\%$ of weights up, in every pair and seed. Read the other way, this is a hiding capacity that grows with the range of environments a verifier must allow: in the 1.1B Q4\_K weights of the 1.5B model, up to about $2\times10^{6}$ uniformly placed one-step edits (just under $0.2\%$, for the five-pair calibration and an attacker on a low-noise pair) stay within what the statistic must attribute to benign environment variation, and more as that range widens. For a verifier who knows the environment pair but cannot replicate the producer's build, the capacity falls to about $2\times10^{5}$ (one who can replicate it simply compares hashes). Either way it is a quantitative case for $B{=}0$, under which a single edit changes the hash.

\paragraph{Sparse large edits} The tail statistic counts edits, so a handful of large ones escape it. A max statistic---the largest $|\Delta w|/\sigma$ in the model---does not: across the Q4\_K pairs, honest artifacts reach at most $2.36$, while moving a single weight's code to its farthest value already exceeds that in every seed and pair ($2.40$--$2.52$), and five such edits reach at least $2.57$. The two statistics are complementary---many small edits raise the tail count, a few large edits raise the maximum---so an attacker who stays below both must make few, small edits, which is the capacity bounded above.

\paragraph{Scope} These are uniform, untargeted code edits on one model family. We additionally evaluate three adaptive variants. The first perturbs per-block \emph{scales} (\S\ref{sec:bg}), which moves every weight in a block by sub-step amounts. When we multiply every Q4\_K super-scale by $1{+}\epsilon$ and sweep $\epsilon$, the pooled tail catches every perturbation in the sweep that is large enough to harm the model---at $\epsilon{=}0.02$ perplexity diverges ($>\!10^{6}$) and the pooled tail is $1.3\%$, well above the $0.19\%$ threshold---while $\epsilon{=}0.005$ evades all three statistics (per-tensor tail, pooled tail, max) at a perplexity cost of $+0.29\%$, and $\epsilon{=}0.01$ still evades the pooled and max statistics (the per-tensor statistic is borderline there) at $+0.98\%$---a few times to an order of magnitude above the benign environment drift ($+0.06\%$), but far below a damaging level (the sweep scores each perturbed artifact against the unperturbed one). In this sweep, the blind spot to a uniform scale perturbation is bounded and behaviorally minor. The second variant is adaptive \emph{placement}: an adversary who flips the codes whose one-step $|\Delta w|/\sigma$ is smallest hides up to the reservoir of such weights---$0.43\%$ of the Q4\_K weights here ($\approx\!4.7$M)---with the pooled tail near zero and the max statistic below $0.12$, evading all three statistics, whereas the same number of \emph{uniformly} placed flips is caught (max $2.4$--$7.1>2.36$ across the budgets tested). This is a genuine blind spot, but a bounded, low-leverage one: the hidden edits are by construction the lowest-magnitude ones (their total weight change is about $4$--$8\times$ smaller than uniform placement). The third, \emph{targeted} case (e.g.\ rounding-guided injection~\cite{chen2026qura}) is different in kind, and we treat it next. When $B{=}0$, all of these remain soundly detectable by exact re-derivation regardless (\S\ref{sec:protocol}).

\paragraph{A targeted attack defeats the statistic} The bounded blind spots above are for untargeted edits. A \emph{targeted} tamper is different in kind. Qwen2.5-1.5B ties its input and output embeddings, so a token's output logit is $\langle h,E_t\rangle$ for the final hidden state $h$; an adversary can raise a chosen token's logit by one-step edits to its embedding row, and---because that row is \texttt{Q6\_K}, whose one-step move is below $\tau$ for $95.5\%$ of elements---can restrict itself to sub-$\tau$ edits. Doing so flips the model's greedy next token to the chosen one while \emph{every} edit stays below $\tau$ (max $|\Delta w|/\sigma\le0.10$, no edit above $\tau$, pooled contribution ${\approx}10^{-6}$), evading all three statistics; we confirmed the flip on two probes by a full-precision forward pass with the tampered, dequantized embedding swapped in (e.g.\ ``the opposite of hot is'' $\to$ \emph{hot}, and a factual completion redirected to a rank-10 alternative). One embedding row's sub-$\tau$ budget covers a $\sim\!4$-logit margin here---enough to promote a plausible alternative to the top; a far target needs above-$\tau$ edits or several rows, and on a third probe (``two plus two equals'') the predicted margin was reached with sub-$\tau$ edits but the forward pass did not flip, so one row's budget is not always sufficient. The statistic therefore characterizes the benign and untargeted regimes but is \emph{not} a guarantee against a targeted cheating quantizer. This is why the protocol rests on $B{=}0$ exact re-derivation rather than statistical detection: when $B{=}0$, any edit---targeted or not, sub-$\tau$ or not---is a hash mismatch (\S\ref{sec:protocol}).

\paragraph{Threat boundary} The characterization above concerns a cheating quantizer ($Q\neq\Tcal$). An adversarial-fp16 attack such as Mind the Gap~\cite{egashira2025mindthegap} differs in kind: the publisher honestly quantizes a maliciously prepared fp16 ($Q=\Tcal$), so the signed $W$ is itself the attack and re-derivation reproduces the artifact exactly---it lies outside re-derivation by definition (A1), not by measurement, and neither our re-derivation analysis nor the statistic above addresses it.

\begin{figure*}[t]\centering
\includegraphics[width=\textwidth]{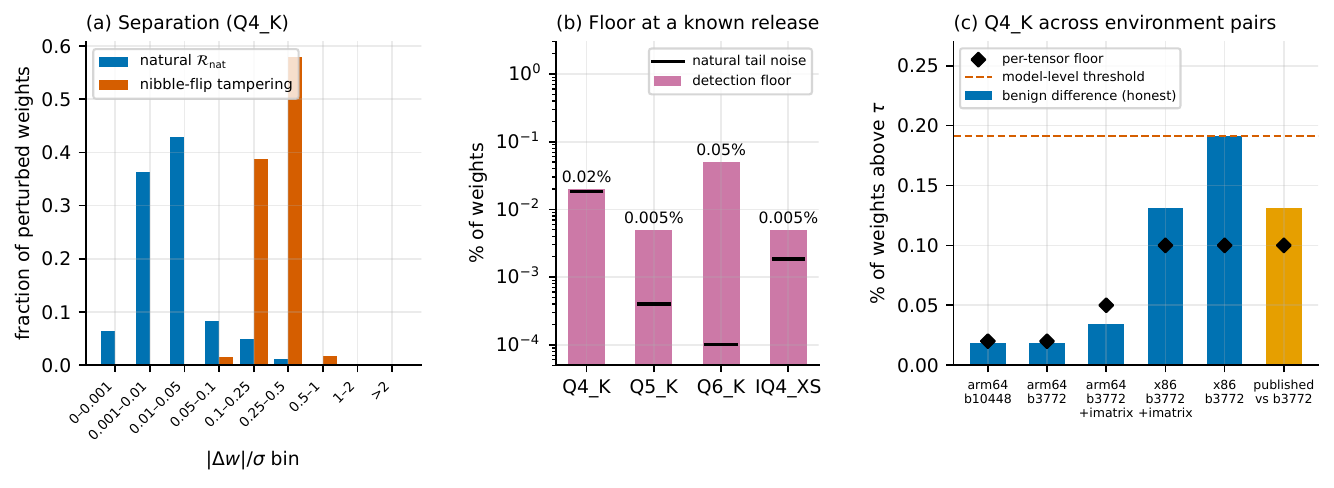}
\caption{Security characterization (Qwen2.5-1.5B-Instruct): (a) natural vs.\ nibble-flip perturbation distributions (Q4\_K); (b) per-tensor detection floor at a known release (arm64 contraction pair; tampered = honest rebuild plus uniformly placed one-step flips; AUC $\ge0.99$ in every seed) and the natural tail noise it must exceed; (c) Q4\_K across environment pairs: benign difference (fraction of weights above $\tau$), each pair's per-tensor floor, and the model-level threshold calibrated on the known-release pairs.}\label{fig:security}
\end{figure*}

\section{Verification Protocol}\label{sec:protocol}
The measurements above compose into a verification recommendation whose parts depend on one another in a specific order. The precondition is $B{=}0$ over the environments rebuilders use ($B_{\Ecal'}$, \S\ref{sec:formal}): when the honest output is unique, an honest independent rebuilder that recomputes $Q$ and compares hashes can soundly detect any deviation $Q\neq\Tcal(\Xbar;E)$---assuming the committed inputs and hash are sound (A2, A4)---so a cheating quantizer cannot hide an edit to the fp16$\to Q$ step (R1, Appendix~\ref{app:proofs}). Inputs it supplies itself ($W$, $C$, $M$) are bound by the manifest, not vetted (\S\ref{sec:bg}). $B{=}0$ is sufficient rather than necessary: if the plausible environments form a small, known set $\Ecal'$, a verifier can instead accept $Q$ when it lies in $\{\Tcal(\Xbar;E):E\in\Ecal'\}$. Without a recorded $E$ that set is not known: our \texttt{b3772} pilot (\S\ref{sec:m1}) enumerated three distinct honest outputs, including a native FMA-capable x86 build, and the published artifact was none of them. Recording $E$ makes membership checkable, and $B{=}0$ over the rebuilders' environments reduces it to a single hash comparison. When $B{>}0$, a nonzero diff no longer implies tampering (\S\ref{sec:m3}); verification can no longer rely on exact hash equality, and for the evaluated cheating-quantizer threat, detection is instead characterized by a detection floor that grows with the range of environments the verifier must allow ($0.02$--$0.2\%$ of Q4\_K weights, \S\ref{sec:m3}). Because $B{=}0$ over the tested platforms is attainable by one compiler flag (PR~\#25353) at negligible measured quality cost (\S\ref{sec:m2}), yet is not the \texttt{llama.cpp} default (the PR remains open as of September 2026), the first recommendation is simply to make deterministic quantization the default. This does not remove $E$ from the specification: the flag is itself an $E$ coordinate, and $B{=}0$ is an empirical property of the tested (architecture, compiler) pairs at the releases and types we tested (\texttt{b10448}; \texttt{b3772} and \texttt{b9222} for \texttt{Q4\_K\_M}) rather than a universal one: at \texttt{b9222} a \texttt{Q8\_0} tensor with all-zero blocks still stores architecture-dependent codes (\S\ref{sec:m1}), so the record must include the architecture or the kernel must canonicalize zero blocks. The manifest below therefore still records $E$---enabling the flag only makes that record cheap (the quantizer version plus one build flag).

Given $B{=}0$, the architecture is the reproducible-builds pattern adapted to quantization, differing from today's ecosystem in \emph{what is signed}. The transforming party---the quantizer, not only the original model signer---signs a manifest committing to the inputs it used, $(H(W),\theta,C,E,M)$, so that the claim ``$Q=\Tcal(\Xbar;E)$'' becomes checkable rather than asserted. Independent rebuilders recompute $Q$ from the manifest and publish signed match/mismatch results to an append-only transparency log (publicly readable and append-only, so a mismatch report cannot be quietly removed); end users verify the published results without performing the rebuild themselves. This inverts the current trust relation, in which a signature over $Q$ certifies its origin but not its faithfulness to $W$ (\S\ref{sec:bg}). The primitives here are standard---a signed manifest as in in-toto's link attestations~\cite{torresarias2019intoto}, an append-only transparency log as in Sigstore's Rekor~\cite{newman2022sigstore}, and cross-machine bit-exact re-derivation as ggufpacker already performs for the quants it packs~\cite{ggufpacker}. Our contribution is not this machinery but the measured specification of \emph{what it must bind} for a GGUF quantization to verify: the build's FP-contraction coordinate (\S\ref{sec:m2}), the per-tensor recipe (\S\ref{sec:m1}), and a hash of the exact imatrix, which \S\ref{sec:m1} shows an independent party could not recompute on the CPUs we tested. A publisher-side packager pins these coordinates implicitly for the quants it packs, but none of the signing or attestation specifications we examine (\S\ref{sec:bg}) requires them.

A tempting shortcut is to verify only a random sample of the artifact rather than re-derive all of it: the manifest would additionally commit to $Q$ via a Merkle root over its $N$ superblocks, letting a verifier \emph{open}---request and re-derive---individual units. Its cost, however, grows precisely as the adversary that matters---sparse tampering---becomes sparser. The bound below is the standard spot-checking argument of provable data possession~\cite{ateniese2007pdp}, which we apply to partial re-derivation.

\begin{proposition}[Sampling bound; standard, cf.~\cite{ateniese2007pdp}]\label{prop:sample}
A partial verifier re-deriving a uniform $k$-subset of $N$ units catches an $m$-unit tampering with $P_{\det}=1-\binom{N-m}{k}/\binom{N}{k}$, satisfying $1-e^{-km/N}\le P_{\det}\le km/N$. Attaining $P_{\det}\ge 1-\delta$ needs $k=\Theta(N/m)$.
\end{proposition}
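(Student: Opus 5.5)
The argument is a direct counting computation followed by two elementary inequalities. The exact formula comes first. The verifier misses the tampering exactly when its uniform $k$-subset avoids all $m$ tampered units. There are $\binom{N-m}{k}$ such subsets out of $\binom{N}{k}$, so
\[
P_{\det}=1-\binom{N-m}{k}\Big/\binom{N}{k}.
\]
It is convenient to rewrite the miss probability as a product of sequential draws without replacement:
\[
\binom{N-m}{k}\Big/\binom{N}{k}=\prod_{i=0}^{k-1}\frac{N-m-i}{N-i}=\prod_{i=0}^{k-1}\Bigl(1-\frac{m}{N-i}\Bigr).
\]
This product form drives both bounds. We assume $k\le N-m$; otherwise $P_{\det}=1$ and both claims are immediate.

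For the lower bound, each factor satisfies $1-\frac{m}{N-i}\le 1-\frac{m}{N}\le e^{-m/N}$, since $N-i\le N$ and $1-x\le e^{-x}$. The product is therefore at most $e^{-km/N}$, which gives $P_{\det}\ge 1-e^{-km/N}$.

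For the upper bound, a union bound over the $m$ tampered units suffices. Each fixed unit lies in the uniform $k$-subset with probability $k/N$, so $P_{\det}\le \min(1,km/N)\le km/N$.

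The $\Theta(N/m)$ claim follows by inverting both bounds. For sufficiency, choosing $k=\lceil (N/m)\ln(1/\delta)\rceil$ makes $e^{-km/N}\le\delta$, hence $P_{\det}\ge1-\delta$. For necessity, $P_{\det}\ge1-\delta$ together with $P_{\det}\le km/N$ forces $k\ge(1-\delta)N/m$.

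The only delicate point is the bookkeeping in the $\Theta$ statement. We treat $\delta\in(0,1)$ as a fixed constant, so the two matching bounds $(1-\delta)N/m$ and $(N/m)\ln(1/\delta)$ differ only by constants. We also need $m\le N$, which holds trivially. Nothing from the earlier sections is required: the proposition is independent of $B$, and its relevance to the protocol is only interpretive. As $m\to1$, which is the sparse or targeted edit of \S\ref{sec:m3}, $k$ approaches $\Theta(N)$, so partial re-derivation degenerates into full re-derivation.
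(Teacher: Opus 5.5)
Your proposal is correct and follows the paper's proof step for step. Both use counting for the exact form, a union bound on $\mathbb{E}|U\cap S|=km/N$ for the upper bound, and the telescoping product with each factor at most $1-m/N\le e^{-m/N}$ for the lower bound, then invert the two bounds to get $(1-\delta)N/m\le k\le (N/m)\ln(1/\delta)$. Your restriction to $k\le N-m$, which keeps every factor nonnegative before the bounds are multiplied, is a small piece of care the paper leaves implicit.
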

\emph{Proof sketch.} Union bound on $\mathbb{E}|U\cap S|{=}km/N$ for the sampled $k$-subset $U$ and tampered set $S$ (necessity) and $\binom{N-m}{k}/\binom{N}{k}\le e^{-km/N}$ (sufficiency); Appendix~\ref{app:proofs}. For sparse tampering the required budget grows as $N/m$, approaching full re-derivation as $m\to1$, so spot-checking is weakest where detection matters most (and each opened unit also needs its fp16 inputs authenticated against $H(W)$, which a flat hash supports only by downloading $W$); we favor full re-derivation when reproducible, with attestation as a \emph{residual} mechanism only where re-derivation is unavailable.

\section{Discussion \& Limitations}\label{sec:disc}
Relative to ggufpacker and PR~\#25353, what is new is the ecosystem measurement and its decomposition, the localization of the failure to the quantization step (the fp16 re-derives exactly, the quantized weights do not), the two card-absent coordinates and the imatrix we could not recompute, which a manifest must therefore bind, the vocabulary that makes the verifier's requirement precise, and the security characterization of the $B{>}0$ regime. \textbf{Limitations:} models (Qwen 0.5B--14B, one Granite 3B), types (Q4\_K, Q5\_K, Q6\_K, IQ4\_XS, plus Q8\_0 as a determinism control; all 23 published types of one repository for re-derivation only), pipeline (\texttt{llama.cpp} CPU quantization). Out of scope: GPU imatrix generation $C$ and GPU-native pipelines (GPTQ/AWQ/bitsandbytes), which may require attestation rather than re-derivation; and adversarial-fp16 attacks, which no transformation-level check---re-derivation or attestation---detects, since the transformation is honest. The floors are measured for uniformly placed one-step flips (and farthest-code edits for the max statistic) against six environment pairs of one model family, plus uniform scale-perturbation, adaptive-placement, and targeted-embedding sweeps (\S\ref{sec:m3}). The targeted embedding attack defeats the statistic, which is why the protocol rests on $B{=}0$ exact re-derivation rather than detection. We do not evaluate adversarial-fp16 attacks; a preliminary run of the Mind-the-Gap code on a 0.5B model did not yield a working attack and is reported in the artifact only. The 12-card LLM audit is a sample, complemented by a pattern scan of all 198 cards (\S\ref{sec:m1}). \textbf{Recommendations:} \texttt{llama.cpp}/GGUF should record the quantizer version and build flags and merge PR \#25353; IETF/ML-BOM should bind $C$ and $E$.

\section{Related Work}\label{sec:rw}
\textbf{Reproducible builds.} The paradigm~\cite{lamb2022reproducible,reproduciblebuilds}, its tooling~\cite{diffoscope,reprotest}, and automated localization of non-reproducibility~\cite{ren2018reploc} establish an independently verifiable source-to-binary correspondence, reducing the trust placed in the build process. We transfer this lens to model quantization and find that metadata normalization cannot reach its non-determinism; the environment must then be made irrelevant or recorded~\cite{rbrecording}, and neither is current practice (\S\ref{sec:m1}, \S\ref{sec:m2}).
\textbf{Model signing \& attestation.} Sigstore Model Signing~\cite{sigstoremodeltransparency,openssfmodelsigning} authenticates the identity of the released files; provenance and lifecycle efforts---the IETF lifecycle-attestation draft~\cite{sharif2026ietfattest}, the CycloneDX ML-BOM transparency inventory~\cite{cyclonedxmlbom}, and attestation of LLM pipelines~\cite{tan2026attesting}---describe more of the pipeline, and some record deployment or toolchain context. What none of the specifications we examine \emph{requires} is the specific calibration artifact $C$, or execution-environment information (compiler, architecture, FP-contraction setting) in a form a verifier can re-create, both of which our bit-exact GGUF re-derivation protocol consumes (\S\ref{sec:bg}). Our manifest is an in-toto-style provenance statement~\cite{torresarias2019intoto} published to a transparency log as in Sigstore~\cite{newman2022sigstore}; SLSA provenance~\cite{slsa} has generic slots that could carry $C$ (resolved dependencies) and $E$ (build parameters), so what quantization needs is a profile that makes them mandatory and names their contents. Empirical studies of the Hugging Face ecosystem report missing provenance and reproducibility attributes as a general reuse challenge~\cite{jiang2023hfreuse}, and model-card completeness is uneven~\cite{liang2024modelcards}; our census quantifies one such attribute, the quantization environment, for the deployed GGUF slice. A concurrent registry study functionally tests 327 GGUF artifacts and finds silently defective official builds~\cite{patodiya2026broken}; it checks behavior where we check provenance by re-derivation, and the two are complementary. Statistical certification of a compressed model's quality~\cite{singh2026certifying} is likewise orthogonal to provenance. Attestable Builds~\cite{hugenroth2025attestable} obtain verifiable binaries from TEEs instead of rebuilds, the residual route we leave to attestation.
\textbf{Quantization-time attacks.} Mind the Gap~\cite{egashira2025mindthegap}, its predecessor on LLM quantization~\cite{egashira2024exploiting}, and, more generally, Qu-ANTI-zation and quantization backdoors against commercial toolkits~\cite{hong2021quantization,ma2023quantbackdoor} demonstrate \emph{quantization-conditioned} attacks: an adversarially prepared full-precision model behaves maliciously only after quantization (targeted misclassification or backdoor, across quantization schemes). Because the quantization step itself is faithful ($Q=\Tcal$), these fall outside our cheating-quantizer threat model---re-derivation reproduces them rather than flagging them---but they define the threat class that \S\ref{sec:m3} places outside re-derivation. Closer to our cheating quantizer, rounding-guided backdoor injection~\cite{chen2026qura} flips the rounding direction of selected weights during quantization, and bit-flip attacks show that a few targeted weight changes can suffice~\cite{coalson2024prisonbreak}; both are the natural attacks to place against the statistics of \S\ref{sec:m3}, whose targeted-embedding evaluation is a simple exact, gradient-free relative of the former. Defenses that re-round weights, such as EFRAP~\cite{li2024efrap} and QuantGuard~\cite{zheng2026quantguard}, would themselves depart from a canonical bit-exact $\Tcal$.
\textbf{Non-weight (metadata) backdoors.} A parallel surface is the non-weight component $M$: training-free backdoors carried in the chat template, with the weights left untouched~\cite{fogel2026chattemplate,wang2026badtemplate}. These motivate treating non-weight components such as chat templates as part of the signed manifest (\S\ref{sec:protocol}), though detecting them is orthogonal to re-derivation.
\textbf{Verifying ML computation.} zkML~\cite{chen2024zkml} and locality-sensitive commitments over activations~\cite{ong2025toploc} verify \emph{inference} computations, and deterministic-inference work removes floating-point non-determinism from the forward pass~\cite{he2025nondeterminism} or replays it bit-exactly as an audit signature~\cite{cankaya2026bitexact}. All target the inference that consumes a deployed model, not the \emph{transformation} $\Tcal$ that produces it; our focus is the latter, verified by re-derivation.

\section{Conclusion}\label{sec:conc}
The ML supply chain authenticates identity but deploys transformations that are insufficiently specified to support independent bit-exact verification; in our popularity-sampled slice the recorded-inputs \ivr{} is $0/198$, and re-derivation by environment search succeeded, in our trial, for recent releases but not for an older one. For the studied \texttt{llama.cpp} quantization pipeline, the environment gap is closable for newly produced artifacts on the tested platforms---by enabling deterministic quantization, at negligible measured quality cost, and recording the release and build---while the remaining source gaps require publishers to release their fp16 and imatrix, together with the per-tensor recipe, since an independent party could not recompute the imatrix in our attempt; adoption carries engineering cost. Until it closes, benign environment variation leaves room to hide: up to about $0.2\%$ of uniformly placed one-step weight edits (some $2\times10^{6}$ at 1.5B, and more as the admissible environment range widens) stay within what the evaluated statistic must attribute to it, an adaptively placed adversary hides a further low-magnitude reservoir, and a targeted edit evades the statistic outright. We provide a re-runnable audit and tools to track the \ivr{}.

\section*{Ethical Considerations}
This is a measurement and characterization study; it introduces no new attack class. Our audit reads only public information from public Hugging Face repositories (file inventories, model cards, and GGUF headers fetched by range requests); we collect no personal data and report aggregate rates rather than judgements of specific publishers---the repositories we name (\S\ref{sec:m1}) are best cases whose publisher discloses more than most. The cheating-quantizer analysis and detector are defensive---they separate benign non-determinism from tampering to aid verification. We ran one previously published attack procedure (Mind the Gap) with its authors' public code on a small model, which did not yield a working backdoor in our run, and we redistribute no attacked model; the synthetic nibble-flip constructions and the targeted-embedding check (which flips a next-token prediction on a neutral probe) are diagnostic detector evaluations, not deployable backdoors. Our central recommendation---recording the execution environment and enabling deterministic quantization, as an already-proposed upstream change (PR~\#25353) does---is a defensive improvement to the supply chain.

\section*{Open Science}
All artifacts needed to evaluate our claims are available at \url{https://github.com/Rlonglong/gguf-verifiability-artifact}.
They comprise the frozen 198-repository list with the scored census snapshot and the re-runnable scoring harness (\texttt{census.py}); the model-card audit verdicts and the classified pattern scan of all 198 cards; the GGUF header-probe, byte-region diff, tensor-data hash, and detector tools, with the detector evaluation over environment pairs; the cross-architecture, pilot and trial re-derivation (\texttt{b3772}, \texttt{b9222}, \texttt{b10603}), imatrix-regeneration, conversion, and perplexity drivers; the scale-perturbation, adaptive-placement, and targeted-embedding sweeps; and the raw per-figure data (CSV/JSON) with the scripts that render every figure and reported number. The inputs are public: Hugging Face model artifacts, pinned open-source \texttt{llama.cpp} builds (\texttt{b3772}, \texttt{b9222}, \texttt{b10448}, \texttt{b10603}), and, for the Mind-the-Gap workload, its authors' public code; we reference these by identifier rather than redistribute them. A Zenodo deposit will accompany the published version.

\section*{LLM usage considerations}
LLMs were used for editorial purposes in this manuscript, and all outputs were inspected by the authors to ensure accuracy and originality; the authors take full responsibility for every statement, number, and line of code in the paper and artifact. We also disclose two further uses. \emph{Methodology:} the model-card audit of \S\ref{sec:m1} used LLM agents (one Claude agent per card) instructed to refute our scorer's label. The \ivr{} itself rests on deterministic metadata checks, which the agents could only confirm or contest; because the agent model is closed and its outputs may not be exactly reproducible, we complement the sample with a deterministic pattern scan of all 198 cards whose every hit we classified manually, and we release the per-card verdicts with their rationales so the conclusion can be checked without an LLM. \emph{Engineering:} an LLM coding assistant helped write experiment and analysis scripts; every reported number comes from executed code and released raw data and was checked against that data by the authors, and every reference was verified against its source. \emph{Footprint:} we train no models beyond the few hundred fine-tuning steps of the preliminary Mind-the-Gap run; our experiments are CPU quantization, a handful of perplexity evaluations, and a few GPU forward passes for the targeted-embedding check, on open models of at most 14B parameters, on one Apple M4 machine and one x86-64 workstation, and the census fetches only GGUF headers (8--16\,MB per repository) instead of full model files.

\bibliographystyle{ieeetr}
\bibliography{refs}

\appendices
\section{Proofs}\label{app:proofs}
\textbf{Notation (recap).} Split the non-environment inputs from the environment: $\Xbar=(W,\theta,C,M)$, full input $X=(\Xbar,E)$ with $E\in\Ecal$; the implementation family is $\{\Tcal(\Xbar;E):E\in\Ecal\}$, the admissible-output set $\Acal(\Xbar)=\{\Tcal(\Xbar;E):E\in\Ecal\}$, and $B(\Xbar)=\sup_{Q_1,Q_2\in\Acal(\Xbar)} d(Q_1,Q_2)$ with $d$ the discrete metric on serialized artifacts (\S\ref{sec:formal}).

\subsection{Proof of Observation~\ref{prop:e} (necessity of $E$)}\label{app:proof-e}
\begin{proof}
Fix a single instance $\Xbar$: $W=$ Qwen2.5-1.5B-Instruct fp16 (a fixed GGUF with $H(W)$ identical on both platforms), $\theta=$\texttt{Q4\_K\_M}, $C=\varnothing$ (no imatrix), and $M$ the model's metadata. Take two environments $E_1=(\text{x86-64, gcc, default FP contraction})$ and $E_2=(\text{arm64, clang, default FP contraction})$, with an identical pinned build configuration (CPU-only, static, Release) on both; thread counts differed (8 vs 6) and are separately observed not to affect the output (every configuration is bit-deterministic over 10 repeated runs, and the 8- and 6-thread \texttt{-ffp-contract=off} outputs are byte-identical). Building \texttt{llama.cpp} at the pinned release build \texttt{b10448} and quantizing on each platform (\S\ref{sec:m2}) yields $\Tcal(\Xbar;E_1)\neq\Tcal(\Xbar;E_2)$: the two outputs byte-differ ($H(\Tcal(\Xbar;E_1))\neq H(\Tcal(\Xbar;E_2))$), with the divergence confined to the k-quant weight tensors (the metadata/KV region is byte-identical) and originating in the FMA-sensitive scale-search loop of \texttt{ggml-quants.c}. Hence $|\Acal(\Xbar)|\ge 2$ and $B(\Xbar)>0$: the specification that pins only $\Xbar$ and leaves $E$ free is incomplete, so any complete specification for this transformation family must additionally constrain $E$; thus $E\in X^*$.
\end{proof}
\paragraph{Remark (a sufficient completion; not a converse)} Adding the single coordinate $E_{\text{fpcontract}}$ (compiling \texttt{ggml-quants.c} with \texttt{-ffp-contract=off}) makes the output bit-identical across the two environments above, i.e.\ across both tested (architecture, compiler) pairs (\S\ref{sec:m2}: $4/4$ bit-identical for the four types tested there, including the scale-search type \texttt{IQ4\_XS}); thus $\Xbar\cup\{E_{\text{fpcontract}}\}$ is a \emph{sufficient} completion for the tested environments. This is separate from the observation itself, which establishes only that $E$ is necessary. We do \emph{not} claim (i) that $E_{\text{fpcontract}}$ suffices over \emph{all} environments (untested compilers/GPUs may add coordinates), nor (ii) that \emph{every} transformation model requires $E$---a family fixing a single canonical implementation would relocate the requirement into that implementation's specification rather than eliminate it. Characterizing $X^*$ in full (GPU imatrix, other pipelines) is future work.
\paragraph{Consequence (specification gap)} The deployed specifications we examine---Model Signing ($H(W)$ only), the IETF lifecycle-attestation draft, CycloneDX ML-BOM---do not require the execution-environment coordinates (compiler, target architecture, FP-contraction setting) that determine the output; the IETF draft binds deployment/TEE context and a quantization-tool hash, which identifies a binary but does not record the coordinates needed to re-create an equivalent one on another platform. They are therefore incomplete for independent re-derivation over the studied family: under their required fields, a verifier without the producer's binary and platform cannot determine which $Q$ to expect. This underpins the execution-environment gap measured in \S\ref{sec:m1} (the build coordinates of $E$ absent from the GGUF metadata and the model cards of all 198 audited artifacts). We state this as \emph{the necessity of $E$ explains why the universal gap blocks re-derivation of scale-search artifacts}, not as \emph{a $0$ \ivr{} is a formal consequence of missing $E$ alone} (a $0$ \ivr{} can also arise from missing $C$, $W$, or $\theta$).

\subsection{Proof of Proposition~\ref{prop:sample} (sampling bound; standard)}\label{app:proof-sample}
\textbf{Setup.} Partition the artifact into $N$ units (superblocks). The honest transformation fixes each unit's value. A tampering adversary alters a set $S$, $|S|=m$, with $Q\neq\Tcal(\Xbar;E)$ exactly on $S$. A partial verifier re-derives a uniformly random subset $U\subseteq[N]$, $|U|=k$, and reports tampering iff $U\cap S\neq\varnothing$.
\begin{proof}
\emph{Exact form.} $\Pr[U\cap S=\varnothing]$ is the probability a uniform $k$-subset avoids the fixed $m$-set $S$, $=\binom{N-m}{k}/\binom{N}{k}$; hence $P_{\det}=1-\binom{N-m}{k}/\binom{N}{k}$.

\emph{Upper bound (necessity).} With $Z_i=\mathbf{1}[i\in U\cap S]$ and $\mathbb{E}|U\cap S|=k\cdot(m/N)$, the union bound gives $P_{\det}=\Pr[|U\cap S|\ge 1]\le \mathbb{E}|U\cap S|=km/N$. Thus $P_{\det}\ge 1-\delta \Rightarrow k\ge(1-\delta)N/m$.

\emph{Lower bound (sufficiency).} Writing the ratio as a telescoping product and using $(N-m-i)/(N-i)\le 1-m/N$ for each $0\le i<k$,
\[
\frac{\binom{N-m}{k}}{\binom{N}{k}}=\prod_{i=0}^{k-1}\frac{N-m-i}{N-i}\le\Bigl(1-\tfrac{m}{N}\Bigr)^{k}\le e^{-km/N},
\]
so $P_{\det}\ge 1-e^{-km/N}$, and $e^{-km/N}\le\delta \Leftrightarrow k\ge (N/m)\ln(1/\delta)$ suffices. Combining the two bounds, $k=\Theta(N/m)$.
\end{proof}
\paragraph{Corollary (spot-checking degrades with sparsity)} For a 7B model, $N\approx 2.7\times10^{7}$ superblocks. Against sparse tampering $m=10^{2}$, $\delta=0.05$: the necessary budget is $k\ge 0.95\,N/m\approx 2.6\times10^{5}$, and $k=\ln(20)\,N/m\approx 8.1\times10^{5}$ suffices. Each opened unit needs a Merkle path of $\lceil\log_2 N\rceil\approx 25$ sibling hashes ($\sim$32\,B each), i.e.\ $\sim$800\,B, for a proof of $\approx 0.2$--$0.65$\,GB as independent paths (a multiproof sharing upper-level nodes needs about $k\log_2(N/k)$ hashes, a few times fewer). The proposition establishes only that the spot-check budget scales as $\Theta(N/m)$; from this we draw two \emph{engineering} conclusions (not theorem consequences): for sufficiently sparse tampering, spot-checking is impractical---its proof grows as $N/m$ and approaches full re-derivation as $m\to1$ (and authenticating the opened units' fp16 inputs against a flat $H(W)$ requires downloading $W$, $\approx$14\,GB, anyway); and when full re-derivation is feasible ($B{=}0$) it avoids this budget entirely, with attestation a residual mechanism where re-derivation is unavailable.

\subsection{Remark}
\textbf{R1.} For a transformation instance with $B_{\Ecal'}(\Xbar){=}0$ over the rebuilders' environments $\Ecal'$, the admissible output is unique; under A2 and A4, an honest rebuilder recomputing the transformation detects any $Q\neq\Tcal(\Xbar;E)$ except with negligible hash-collision probability.

\section{Assumptions and Security-Claim Boundary}\label{app:assume}
Our claims hold under the following assumptions. A1--A4 govern the primary contribution (re-derivation and the structural characterization); A5--A7 are needed \emph{only} for the residual attestation fallback (\S\ref{sec:protocol}), which we treat as future work.
\begin{description}
\item[A1 (Boundary).] The source signature $\mathsf{sig}_W$ is unforgeable and the signed source $W$ is itself trustworthy. We do \emph{not} address upstream poisoning: if $W$ is malicious (an adversarial-fp16 model, Mind the Gap), an honest quantization of it is faithful, and neither re-derivation nor our detector flags it (\S\ref{sec:m3}).
\item[A2.] The hash $H$ is collision-resistant; underlies both the manifest commitments and the equality $H(Q')=H(Q)$.
\item[A3 (modeling scope, not a free assumption).] We model the studied transformation's inputs as $X=(\Xbar,E)$ with $\Xbar=(W,\theta,C,M)$---the candidate coordinates that may influence its output. This is the model we \emph{test}, not something we take for granted: Observation~\ref{prop:e} and \S\ref{sec:m2} establish that $E$ must be constrained, \S\ref{sec:m1} measures that deployed specifications omit it, and characterizing $X^*$ in full is open.
\item[A4.] At least one rebuilder executes the specified transformation faithfully on the committed inputs. When $B_{\Ecal'}{=}0$, one honest rebuilder that reproduces the canonical output already detects a deviation; using several independent rebuilders is an operational robustness measure, not part of the soundness argument.
\item[A5--A7 (residual only).] TEE hardware isolation is sound (A5); attestation is unforgeable and its measurement faithfully reflects the executed binary and configuration (A6); the approved quantizer binary has been deemed acceptable out-of-band (A7)---a TEE proves an approved quantizer \emph{ran on committed inputs}, not that its algorithm is correct. These delimit the residual attestation architecture only; we prove no TEE security theorem.
\end{description}
\paragraph{Security-claim boundary, by regime} Against a \emph{cheating quantizer} ($W$ benign, $Q\neq\Tcal(\Xbar;E)$) we state no single unconditional ``detectable'': (i) when $B_{\Ecal'}{=}0$ over the rebuilders' environments (the target state), any $Q\neq\Tcal(\Xbar;E)$ is \emph{soundly detectable} by exact re-derivation; (ii) when $B_{\Ecal'}{>}0$ (the current ecosystem), re-derivation alone cannot decide; for uniformly placed nibble flips we give the \emph{measured detection floor} of \S\ref{sec:m3}, which ranges from $0.02\%$ to $0.2\%$ of Q4\_K weights over the environment pairs we measured; a uniform scale perturbation is caught before it is harmful, adaptive placement hides a bounded low-magnitude reservoir, and a targeted embedding edit evades the statistic entirely (\S\ref{sec:m3}), so the statistic is a characterization, not a guarantee; (iii) a \emph{malicious} $W$ (adversarial-fp16, $Q=\Tcal(\Xbar;E)$) is \emph{not} detectable by re-derivation (A1), since the quantization is honest and re-derivation reproduces it. We deliberately establish how far the no-trusted-hardware approach reaches before invoking attestation.

\end{document}